\newtheorem{theorem}{Theorem}
\newtheorem{proposition}{Proposition}
\newtheorem{corollary}{Corollary}
\newtheorem{eg}{Example}
\begin{document}
\title{Cayley's First Hyperdeterminant is an Entanglement Measure}
\author{Isaac Dobes and Naihuan Jing}
\maketitle

\begin{abstract}
    Previously, it was shown that both the concurrence and $n$-tangle on $2n$-qubit pure quantum states can be expressed in terms of Cayley's first hyperdeterminant \cite{dobes2024qubits}, indicating that Cayley's first hyperdeterminant, denoted $\mathrm{hdet}$, captures some aspects of a state's $2n$-way entanglement. In this paper, we rigorously prove that on both pure and mixed states, $|\mathrm{hdet}|^{2/d}$ is identically zero on separable states, is an LU invariant, and is non-increasing on average under LOCC, thus demonstrating that $|\mathrm{hdet}|^{d/2}$ is a physically meaningful and legitimate entanglement measure. Moreover, we discuss a few key examples to illustrate the particular type of entanglement Cayley's first hyperdeterminant is detecting: genuine full $d$-level GHZ-type entanglement across all $2n$ parties. Combined, this establishes Cayley's first hyperdeterminant (or $|\mathrm{hdet}|^{2/d}$ to be precise), as a physically significant generalization of the $G$-concurrence and the $n$-tangle to $2n$-qudit states. 
\end{abstract}

\section{Introduction}
To compare classes of entangled quantum states, as well as states within the same entanglement class, it would be ideal to derive a function that continuously quantifies all possible levels of the entanglement for any arbitrary state. For pure states in bipartite quantum systems, it is widely agreed that their precise level of entanglement is derived from their Schmidt coefficients \cite{nielsen2010quantum,ma2024multipartite}. However, for states of multipartite quantum states, the situation is more complicated and ambiguous. 

Indeed, at the time of writing, there is currently no \textit{universally} agreed upon mathematical structure which completely determines an arbitrary multipartite state's entanglement, or a complete set of criteria that unambiguously determines whether or not a function defined on arbitrary multipartite quantum states is a valid entanglement measure. Furthermore, this current status is unlikely to change \cite{walter2016multipartite} due to the fact that multipartite quantum states can be entangled in different, inequivalent ways \cite{dur2000three}. Nonetheless, there are a few key properties that most researcher's agree any legitimate measure of entanglement ought to satisfy. 

In the late 1990's numerous landmark papers were published on quantifying entanglement, and in these papers a variety of physically significant properties were listed as axioms that any proposed entanglement measure $E$ ought to satisfy; e.g, \cite{bennett1996mixed,vedral1997quantifying,vidal2000entanglement,bennett2000exact} (see also \cite{eltschka2014quantifying,bengtsson2017geometry} for more recent summaries and analysis of aforementioned papers). While the properties listed varies depending on the paper, three properties that show up in nearly all such papers are the following:
\begin{enumerate}
    \item $E$ is nonnegative and $0$ on separable states
    \item $E$ is invariant under local unitary (LU) operations
    \item $E$ is non-increasing on average under local operations and classical communications (LOCC)
\end{enumerate}
Property $1.$ is due to the fact that by definition separable states posses no entanglement, hence any entanglement measure $E$ should be minimized on such states. Property $2.$ is salient because LU operations are reversible local change of reference frames independently applied to each party, hence no information is transferred between parties and their correlations are unaffected, in which case $E$ should remain unchanged \cite{monras2011entanglement}. Property $3.$ is also crucial because while local measurements can in some instances increase the amount of entanglement in a partially entangled state \cite{bennett1996concentrating}, entanglement is a non-local resource, so its expected value over all measurement outcomes cannot be increased by local operations and classical communication alone. Note also that in Property $3.$, when we say "on average," we precisely mean that if $\rho$ (which may represent a pure or mixed state) is transformed via a LOCC protocol into the mixture of states $\sum\limits_jp_j\rho_j$, then 
\[\sum\limits_jp_jE(\rho_j)\leq E(\rho).\]
If $E$ satisfies Properties $2.$ and $3.$, then $E$ is referred to as an "entanglement monotone" \cite{vidal2000entanglement}, and if $E$ satisfies all three properties above, then it is reasonable to refer to it as an "entanglement measure"; for the sake of convenience, we refer to the above three properties as the "entanglement measure axioms". 

Note that the entanglement measure axioms are by no means the only properties one might desire an entanglement measure to satisfy. For instance, convexity and additivity (or at least sub-additivity) are other properties typically desired in an entanglement measure, however it is the above three that are of the most fundamental importance and found throughout virtually all of the relevant literature \cite{eltschka2014quantifying,zhahir2022entanglement}. Note also that Property $1.$ is sometimes strengthened to: $E(\rho)=0$ if and only if $\rho$ is separable, however this is a very strong condition and there are numerous well-established entanglement measures which do not satisfy this, e.g. the negativity, the entanglement of purification (which does not even satisfy Property $1.$ \cite{bengtsson2017geometry}), and the concurrence/$n$-tangle on $2n$-qubit pure quantum states. Property $3.$ is also often modified to: $E$ is non-increasing on average under stochastic local operations and classical communications (SLOCC) \cite{eltschka2014quantifying,bengtsson2017geometry}, whereby SLOCC we mean that $\rho$ can be transformed into the state $\sigma$ via a LOCC protocol with nonzero probability. 

Two popular entanglement measures are the concurrence and the $n$-tangle. The concurrence was originally formulated to define the entanglement of formation \cite{hill1997entanglement,wootters1998entanglement}, and on pure states of $2n$-qubits it is given by 
\[C(\ket{\psi}) := |\braket{\widetilde{\psi}}{\psi}|,\]
where 
\[\ket{\widetilde{\psi}} := \sigma_y^{\otimes 2n}\ket{\psi^*}\]
with $\sigma_y$ denoting the Pauli matrix $\left[\begin{array}{cc}
    0 & -i \\
    i & 0
\end{array}\right]$. The $3$-tangle was originally formulated to detect GHZ-type entanglement of $3$-qubit states \cite{dur2000three}, and then soon afterwards it was extended to pure states of $2n$-qubits \cite{wong2001potential}. In addition to demonstrating the detection of $2n$-qubit GHZ-type entanglement in pure states of $2n$-qubits, it was also shown in \cite{wong2001potential} that in fact the $n$-tangle is the square of the concurrence on pure states of $2n$-qubits; that is, 
\[\tau_n(\ket{\psi}) = C(\ket{\psi})^2 = |\braket{\widetilde{\psi}}{\psi}|^2.\] 

About 20 years later, it was shown in \cite{dobes2024qubits} that the concurrence, and hence the $n$-tangle, may be expressed in terms of Cayley's first hyperdeterminant when viewed as a quadratic form in the coefficients of a pure $2n$-qubit state $\ket{\psi}$. In particular, I. Dobes and N. Jing prove using elementary combinatorics that 
\begin{equation}\label{Ent - Pauli relationship}
    \widehat{\mathrm{Ent}}_n = \frac{(-1)^n}{2}\sigma_y^{\otimes 2n},
\end{equation}
with $\widehat{\mathrm{Ent}}_n$ denoting the matrix of the quadratic form given by Cayley's first hyperdeterminant. Denoting Cayley's first hyperdeterminant as $\mathrm{hdet}$, from equation \eqref{Ent - Pauli relationship} it is then easy to show that 
\[C(\ket{\psi}) = 2|\mathrm{hdet}(\widehat{\psi})|\]
and hence
\[\tau_n(\ket{\psi}) = 4|\mathrm{hdet}(\widehat{\psi})|^2,\]
where $\widehat{\psi}$ denotes the tensor representation of $\ket{\psi}$ (further discussed in the next section). These results, in addition to the known fact that the hyperdeterminant is invariant under (special) unitary operations \cite{jing2014slocc}, indicate that Cayley's first hyperdeterminant captures at least \textit{some} aspects of a quantum state's entanglement. Unlike the concurrence and the $n$-tangle, however, Cayley's first hyperdeterminant is also defined and non-trivial on pure state of $2n$-qudits for any $d\geq 2$. This suggests that Cayley's first hyperdeterminant may in fact be a generalization of the concurrence and the $n$-tangle. Indeed, in this paper we will rigorously prove that $|\mathrm{hdet}|^{2/d}$ satisfies the entanglement measure axioms on states of $2n$-qudits, and is thus a legitimate and physically meaningful entanglement measure. Moreover, we will illustrate through example that the physical interpretation of Cayley's first hyperdeterminant as an entanglement measure is analogous both to a higher-dimensional $n$-tangle, as well as a multipartite $G$-concurrence (an extension of the concurrence to higher dimensional bipartite quantum states, first proposed in \cite{gour2005mixed}). 

\section{Preliminaries}
\subsection{Hypermatrix Algebra}
Let $F$ be a field and suppose $V$ and $W$ are finite-dimensional vector spaces of dimensions $m$ and $n$, respectively. Just as a linear transformation $T:V\rightarrow W$ in a fixed basis uniquely corresponds to a matrix $A\in F^{m\times n}$, similarly a tensor $T\in V_1\otimes...\otimes V_d$ in a fixed basis with $\dim(V_i)=n_i$ for each $1\leq i\leq d$, uniquely corresponds to a hypermatrix $A\in F^{n_1\times...\times n_d}$. The positive integer $d$ is called the order of a hypermatrix. The Frobenius norm of a hypermatrix $A = [a_{i_1...i_d}]_{n_1\times...\times n_d}\in F^{n_1\times...\times n_d}$ is defined as 
\[\|A\|_F := \left(\sum\limits_{i_1,...,i_n=1}^{n_1,...,n_d}|a_{i_1...i_d}|^2\right)^{1/2}.\]
Just as the transpose permutes the subscripts of a matrix, the $\pi$-transpose permutes the subscripts of a hypermatrix; that is, if $A = [a_{i_1...i_d}]_{n_1\times...\times n_d}\in F^{n_1\times...\times n_d}$, then the $\pi$-transpose of $A$ is defined as the hypermatrix 
\[A^{\pi} := [a_{i_{\pi(1)}...i_{\pi(d)}}]_{n_{\pi(1)}\times...\times n_{\pi(d)}}\in F^{n_{\pi(1)}\times...\times n_{\pi(d)}}\]
where $\pi\in S_d$. 

Suppose $A = [a_{i_1...i_d}]_{n_1\times...\times n_d}\in F^{n_1\times...\times n_d}$ and $B = [b_{j_1...j_e}]_{m_1\times...\times n_e}\in F^{m_1\times...\times m_e}$ are hypermatrices of order $d$ and $e$. The outer product of $A$ and $B$, denoted $A\circ B$, is the hypermatrix of order $d+e$ whose $(i_1,...,i_d,j_1,...,j_e)$ coordinate is given by $a_{i_1,...,i_d}b_{j_1,...,j_e}$. For a hypermatrix $A\in F^{n_1\times n_2\times...\times n_d}$ and matrices $X_1\in F^{m_1\times n_1},...,X_d\in F^{m_d\times n_d}$, the multilinear matrix multiplication of $(X_1,...,X_d)$ with $A$ is defined to be the order $m_1\times...\times m_d$ hypermatrix $(X_1,...,X_d)*A := A'$ such that
\[A'_{i_1i_2...i_d} = \sum\limits_{j_1,j_2,...,j_d=1}^{n_1,n_2,...,n_d}(X_1)_{i_1j_1}...(X_d)_{i_dj_d}A_{j_1j_2...j_d}.\]
Multilinear matrix multiplication is linear in terms of the matrices in both parts; that is, if $\alpha,\beta\in F$, $X_1,Y_1\in F^{m_1\times n_1}$;...; $X_d,Y_d\in F^{m_d\times n_d}$; and $A,B\in F^{n_1\times n_2\times...\times n_d}$; then 
\[(X_1,...,X_d)*(\alpha A + \beta B) = \alpha(X_1,...,X_d)*A + \beta(Y_1,...,Y_d)*B\]
and 
\[[\alpha(X_1,...,X_d)+\beta(Y_1...,Y_d)]*A = \alpha(X_1,...,X_d)*A + \beta(Y_1,...,Y_d)*A.\]
Next, let $k:[d]\rightarrow \mathbb{N}$. The outer product interacts with multilinear matrix multiplication in the following way
\begin{eqnarray}\label{outerproduct - multilinmatmult - relation}
    \begin{aligned}
        &(X_{1_1},...,X_{1_{k(1)}},...,X_{d_1},...,X_{d_{k(d)}})*(A_1\circ...\circ A_d) \\
        &\qquad = (X_{1_1},...,X_{1_{k(1)}})*A_1\circ...\circ (X_{d_1},...,X_{d_{k(d)}})*A_d
    \end{aligned} 
\end{eqnarray}
where $X_{i_j}\in F^{m^{(i)}_j\times n^{(i)}_j}$ and $A_i\in F^{n^{(i)}_1\times....\times n^{(i)}_{k(i)}}$. Consequently, noting that any hypermatrix can be expressed as a sum of the outer product of rank-$1$ tensors \cite{lim2013tensors}, i.e. $A\in F^{n_1\times...\times n_d}$ is of the form
\[A = \sum\limits_{k=1}^r\alpha_k(v_1^{(k)}\circ...\circ v_d^{(k)})\]
with $v_i^{(k)}\in F^{n_i}$ for each $i$ and $k$, then equation \eqref{outerproduct - multilinmatmult - relation} and the linearity of multilinear matrix multiplication implies that 
\begin{equation}\label{outerproduct - multilinmatmult - special case}
    (X_1,...,X_d)*A = \sum\limits_{k=1}^r\alpha_k(X_1v_1^{(k)})\circ...\circ (X_dv_d^{(k)}).
\end{equation}
for any matrices $X_1\in F^{m_1\times n_1}$,..., $X_d\in F^{m_d\times n_d}$.

Now, suppose $A$ is a complex cubical hypermatrix of order $N$ with side length $d$, i.e. $A\in \mathbb{C}^{\overbrace{d\times...\times d}^{N\text{ times}}}$. Then Cayley's First Hyperdeterminant \cite{lim2013tensors,amanov2023tensor}, also known as the combinatorial hyperdeterminant of $A$, is given by the function
\begin{equation}\label{hyperdeterminant - general form}
    \frac{1}{d!}\sum\limits_{\sigma_1,\sigma_2,...,\sigma_N\in S_d}\mathrm{sgn}(\sigma_1...\sigma_d)\prod\limits_{j=1}^dA_{\sigma_1(j)\sigma_2(j)...\sigma_N(j)}.
\end{equation}
Note that Cayley's first hyperdeterminant is invariant under the $\pi$-transpose. Moreover, note that it is identically $0$ for all odd order hypermatrices. For hypermatrices of even order, it is equal to 
\begin{equation}\label{hdet - even order}
    \sum\limits_{\sigma_2,...,\sigma_N\in S_d}\mathrm{sgn}(\sigma_2...\sigma_d)\prod\limits_{j=1}^dA_{j\sigma_2(j)...\sigma_N(j)} =: \mathrm{hdet}(A). 
\end{equation}
Cayley's first hyperdeterminant is also multiplicative-like under a few different hypermatrix products. Namely, 
\begin{proposition}\label{hdet - mult lin product}\cite{amanov2023tensor}
	Let $A$ be a cubical hypermatrix of even order $N$ with side length $d$, and suppose $X_1,...,X_N$ are $d\times d$ matrices. Then
    \[\mathrm{hdet}((X_1,...,X_N)*A) = \det(X_1)...\det(X_N)\mathrm{hdet}(A).\]
\end{proposition}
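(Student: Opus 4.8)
The plan is to reduce the general statement to the case where all but one of the $X_i$ are identity matrices, and then to the case of a single elementary operation. First I would establish the \emph{one-factor lemma}: if $A' = (X_1, I, \dots, I) * A$, then $\mathrm{hdet}(A') = \det(X_1)\,\mathrm{hdet}(A)$. Granting this, the full statement follows by writing $(X_1, \dots, X_N) * A$ as the composition of $N$ such single-factor operations — since multilinear matrix multiplication composes factor-wise, $(X_1, \dots, X_N) * A = (X_1, I, \dots, I) * \bigl( (I, X_2, I, \dots, I) * \cdots * ((I, \dots, I, X_N) * A) \bigr)$ — together with the fact (noted in the excerpt) that $\mathrm{hdet}$ is invariant under the $\pi$-transpose, which lets me move any chosen factor into the first slot so the one-factor lemma applies to each in turn. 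Multiplying the $N$ scalar factors $\det(X_i)$ gives the result.

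For the one-factor lemma I would argue in two stages. Stage one: verify the identity when $X_1$ is an elementary matrix. There are three types. For $X_1$ a permutation matrix for $\tau \in S_d$, applying it to the first index of $A$ amounts to relabeling $\sigma_1 \mapsto \tau \circ \sigma_1$ in the defining sum \eqref{hdet}; since $l(\tau \circ \sigma_1) \equiv l(\tau) + l(\sigma_1) \pmod 2$ and the sum ranges over all of $S_d$, this introduces exactly the sign $(-1)^{l(\tau)} = \det(X_1)$ and permutes the summands among themselves, leaving the rest of the sum formally unchanged. For $X_1 = \mathrm{diag}(c, 1, \dots, 1)$, each term of \eqref{hdet} contains exactly one factor $A_{1 \sigma_2(j') \dots}$ with first index $1$ (namely at $j' = \sigma_1^{-1}(1)$), so every term is scaled by $c = \det(X_1)$. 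For $X_1 = I + c\,E_{kl}$ with $k \neq l$ (a transvection), the new entries are $A'_{i_1 i_2 \dots i_N} = A_{i_1 i_2 \dots i_N} + c\,\delta_{i_1 k} A_{l i_2 \dots i_N}$; substituting into \eqref{hdet} and expanding, the cross terms reorganize into a copy of $\mathrm{hdet}$ of a hypermatrix having two equal "slices" in the first index (rows $k$ and $l$ agree), which vanishes by the standard antisymmetry argument — for each term, the transposition $\tau_{kl}$ acting on the $\sigma_1$ label produces a term of equal magnitude and opposite sign. Hence $\mathrm{hdet}(A') = \det(I + cE_{kl})\,\mathrm{hdet}(A) = \mathrm{hdet}(A)$. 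Stage two: since $\mathrm{GL}_d$ is generated by these elementary matrices and both sides of the one-factor identity are multiplicative in $X_1$ (the left side because multilinear matrix multiplication in a fixed slot composes, $\,(XY, I, \dots)*A = (X, I, \dots)*((Y, I, \dots)*A)$), the identity extends from generators to all invertible $X_1$, and then to all $X_1$ by continuity (both sides are polynomials in the entries of $X_1$, and $\mathrm{GL}_d$ is Zariski-dense).

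An alternative, slicker route for stage one avoids the transvection case entirely: view $A$, with its first index distinguished, as a $d$-tuple of order-$(N-1)$ hypermatrices (its slices $A^{(1)}, \dots, A^{(d)}$), and observe from \eqref{hdet} that $\mathrm{hdet}(A)$ is multilinear and alternating in these $d$ slices — multilinearity is immediate from the product structure, and the alternating property follows because swapping two slices is undone by the transposition argument above. Any multilinear alternating function of $d$ vectors in a $d$-dimensional space is a scalar multiple of the determinant of the matrix they form; applying $X_1$ to the first index replaces the slice-tuple $(A^{(1)}, \dots, A^{(d)})$ by its image under $X_1$, so $\mathrm{hdet}$ picks up the factor $\det(X_1)$ by the transformation law of the determinant. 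I expect the main obstacle to be bookkeeping: making the index manipulations in \eqref{hdet} rigorous — especially tracking which $\sigma_j$ gets relabeled and confirming the parity identity $l(\tau\sigma) \equiv l(\tau) + l(\sigma) \pmod 2$ survives the sign conventions in \eqref{hdet - even order} — and ensuring the reduction to the one-factor case correctly uses $\pi$-transpose invariance rather than an unjustified symmetry. Everything else is routine, and I would likely just cite \cite{amanov2023tensor} for the elementary-matrix computations and present the composition-plus-transpose reduction in full.
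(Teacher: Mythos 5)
The paper does not prove this proposition at all --- it is imported verbatim from \cite{amanov2023tensor} --- so there is no in-paper argument to compare against; your proposal is judged on its own. It is correct, and both of your routes are the standard ones. The reduction to a single-factor operation via the factor-wise composition of multilinear matrix multiplication is valid (and you do not strictly need $\pi$-transpose invariance: the one-factor argument works verbatim in any slot, since each term of \eqref{hdet} contains exactly one entry from each slice along \emph{any} fixed index). Your ``slicker route'' is the cleanest: $\mathrm{hdet}$ is multilinear and alternating in the $d$ slices along a fixed index, and $(X_1,I,\dots,I)*A$ replaces the slice tuple $(A^{(1)},\dots,A^{(d)})$ by $\bigl(\sum_j (X_1)_{1j}A^{(j)},\dots,\sum_j (X_1)_{dj}A^{(j)}\bigr)$, whence the factor $\det(X_1)$. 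One imprecision worth fixing: the slices live in a $d^{N-1}$-dimensional space, not a $d$-dimensional one, so the classification ``every alternating $d$-linear form on a $d$-dimensional space is a multiple of $\det$'' is not the fact you need. The fact you actually use --- that for \emph{any} alternating multilinear $D:V^d\to\mathbb{C}$ one has $D\bigl(\sum_j x_{1j}v_j,\dots,\sum_j x_{dj}v_j\bigr)=\det(X)\,D(v_1,\dots,v_d)$ --- holds for arbitrary $V$ by the usual Leibniz expansion, so the argument goes through; just cite that identity directly. With that correction, either of your stage-one arguments (elementary matrices plus Zariski density, or the alternating-multilinearity transformation law) yields a complete, self-contained proof; the evenness of $N$ plays no role (for odd $N$ both sides vanish).
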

\begin{proposition}\label{hdet - outer product}\cite{amanov2023tensor}
    Let $A$ and $B$ be cubical hypermatrices of even orders $N-k$ and $k$ (respectively), each with side length $d$. Then 
    \[\mathrm{hdet}(A\circ B) = d!\mathrm{hdet}(A)\mathrm{hdet}(B).\]
\end{proposition}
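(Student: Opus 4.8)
The plan is to prove the identity by a direct expansion from the defining formula \eqref{hdet}, using only the fact that the entries of an outer product are products of the entries of the two factors. Set $M := N-k$ and $M' := k$, so that $A$ has order $M$, $B$ has order $M'$, $A\circ B$ has order $N$, and $(A\circ B)_{i_1\ldots i_M j_1\ldots j_{M'}} = A_{i_1\ldots i_M}\,B_{j_1\ldots j_{M'}}$. First I would write out $\mathrm{hdet}(A\circ B)$ from \eqref{hdet}, naming the first $M$ summation permutations $\sigma_1,\ldots,\sigma_M$ and the last $M'$ of them $\tau_1,\ldots,\tau_{M'}$, each ranging over $S_d$, and then substitute the outer-product entry formula into the product over the diagonal index $j\in[d]$.

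Everything after that is bookkeeping organized around three factorizations. (i) The sign splits as $(-1)^{\sum_{j=1}^M l(\sigma_j) + \sum_{j=1}^{M'} l(\tau_j)} = \big(\prod_{j=1}^M (-1)^{l(\sigma_j)}\big)\big(\prod_{j=1}^{M'} (-1)^{l(\tau_j)}\big)$. (ii) Since the entries are scalars, $\prod_{j=1}^d (A\circ B)_{\sigma_1(j)\ldots\sigma_M(j)\tau_1(j)\ldots\tau_{M'}(j)} = \big(\prod_{j=1}^d A_{\sigma_1(j)\ldots\sigma_M(j)}\big)\big(\prod_{j=1}^d B_{\tau_1(j)\ldots\tau_{M'}(j)}\big)$. (iii) The summand is now a product of a factor depending only on $(\sigma_1,\ldots,\sigma_M)$ and a factor depending only on $(\tau_1,\ldots,\tau_{M'})$, so the sum over $(S_d)^N = (S_d)^M\times(S_d)^{M'}$ factors into the product of the two independent sums. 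By \eqref{hdet}, the $\sigma$-sum equals $d!\,\mathrm{hdet}(A)$ and the $\tau$-sum equals $d!\,\mathrm{hdet}(B)$; restoring the overall prefactor $1/d!$ from the definition of $\mathrm{hdet}(A\circ B)$ then gives $\mathrm{hdet}(A\circ B) = \frac{1}{d!}\big(d!\,\mathrm{hdet}(A)\big)\big(d!\,\mathrm{hdet}(B)\big) = d!\,\mathrm{hdet}(A)\,\mathrm{hdet}(B)$.

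There is no genuine obstacle; the only points worth commenting on concern the legitimacy of the bookkeeping. By the $\pi$-transpose invariance of $\mathrm{hdet}$ noted after \eqref{hdet}, it is harmless to assign the first $M$ coordinate slots to $A$ and the last $M'$ to $B$. And the evenness of both $M$ and $M'$ is precisely what keeps $\mathrm{hdet}(A)$ and $\mathrm{hdet}(B)$ from being forced to vanish: for an odd-order cube, reindexing each summation permutation $\sigma_i \mapsto \sigma_i\rho$ by a fixed transposition $\rho$ leaves the diagonal product unchanged but flips the overall sign, so the expression would collapse to zero. Finally, I would remark that this argument is the outer-product analogue of the proof of Proposition \ref{hdet - mult lin product}: there one pushes a single multilinear matrix multiplication through \eqref{hdet} and recognizes the emerging inner sums as ordinary determinants, whereas here one simply splits the summation along the two tensor factors.
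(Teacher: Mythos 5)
Your argument is correct and complete: the entries of $A\circ B$ factor as $(A\circ B)_{i_1\ldots i_{N-k}j_1\ldots j_k}=A_{i_1\ldots i_{N-k}}B_{j_1\ldots j_k}$, so the sign, the diagonal product over $j\in[d]$, and the sum over $(S_d)^N=(S_d)^{N-k}\times(S_d)^{k}$ all split, and the single surviving prefactor $1/d!$ from the definition of $\mathrm{hdet}(A\circ B)$ is exactly what turns $(d!\,\mathrm{hdet}(A))(d!\,\mathrm{hdet}(B))$ into $d!\,\mathrm{hdet}(A)\mathrm{hdet}(B)$. The paper does not prove this proposition itself but cites it from \cite{amanov2023tensor}, so there is no in-paper argument to compare against; your direct expansion is the standard proof and would serve as a self-contained justification, and your side remarks (invoking $\pi$-transpose invariance to place the $A$-slots first, and the parity observation explaining why odd-order factors would force everything to vanish) are accurate as well.
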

The following proposition will also be needed:
\begin{proposition}\label{hdet vanishes when A has a vector factor}
    Let $A$ be a cubical hypermatrix of even order $N$ with side length $d$, and suppose 
    \[A = v\circ B\]
    for some $d$-dimensional vector $v$ and order $N-1$ cubical hypermatrix $B$ with side length $d$. Then, 
    \[\mathrm{hdet}(A) = 0.\]
\end{proposition}
\begin{proof}
    Under the assumption, 
    \[\mathrm{hdet}(A) = \sum\limits_{\sigma_2,...,\sigma_N\in S_d}\mathrm{sgn}(\sigma_2...\sigma_N)\prod\limits_{j=1}^dv_jB_{\sigma_2(j)...\sigma_N(j)} = \underbrace{\left(\sum\limits_{\sigma_2,...,\sigma_N\in S_d}\mathrm{sgn}(\sigma_2,...,\sigma_N)\prod\limits_{j=1}^dB_{\sigma_2(j)...\sigma_N(j)}\right)}_{d!\mathrm{hdet}(B)}\prod\limits_{j=1}^dv_j = 0,\]
    where we have used the fact that $\mathrm{hdet}$ vanishes on odd order hypermatrices. 
\end{proof}
For completeness, we note the following corollary:
\begin{corollary}
    Let $A$ be a cubical hypermatrix of even order $N$ with side length $d$, and suppose that 
    \[A = B\circ C\]
    with $B$ and $C$ odd order hypermatrices. Then 
    \[\mathrm{hdet}(A)=0.\]
\end{corollary}

\subsection{Qudit States as Hypermatrices}
An arbitrary $n$-qudit pure quantum state $\ket{\psi}$ can be expressed as the sum
\[\ket{\psi} = \sum\limits_{i_1,...,i_n=0}^{d-1}\psi_{i_1...i_n}\ket{i_1...i_n}\]
such that 
\[\sum\limits_{i_1,...,i_n=0}^{d-1}|\psi_{i_1...i_n}|^2=1.\]
Recall that $\ket{i_1...i_n} = \ket{i_1}\otimes...\otimes\ket{i_n}$. Therefore, any such state $\ket{\psi}$ may be viewed as a hypermatrix by replacing the Kronecker product $\otimes$ with the outer product $\circ$, in which case we obtain the hypermatrix 
\[\sum\limits_{i_1,...,i_n=0}^{d-1}\psi_{i_1...i_n}\ket{i_1}\circ...\circ \ket{i_n} =: \widehat{\psi}.\]
The correspondence $H:\ket{\psi}\mapsto \widehat{\psi}$ is then an injection between pure states of $n$-quits and complex cubical hypermatrices of order $n$ with side length $d$; moreover, this mapping preserves the Frobenius norm, hence it may be considered a bijection when restricting to hypermatrices with Frobenius norm $1$. Additionally, multilinear matrix multiplication is preserved under this mapping, as the proposition below shows:
\begin{proposition}\label{Kronecker and multi lin mat product}
    Let $\ket{\psi}$ and $\ket{\varphi}$ be pure states of $n$-qudits with corresponding hypermatrix representation $\widehat{\psi}$ and $\widehat{\varphi}$ (respectively). Then for any $d\times d$ matrices $M_1,...,M_n$, we have that  
    \[\ket{\varphi} = (M_1\otimes...\otimes M_n)\ket{\psi} \iff \widehat{\varphi} = (M_1,...,M_n)*\widehat{\psi}.\]
\end{proposition}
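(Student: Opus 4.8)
The plan is to use that $H$ is a linear bijection and that both maps $\ket{\psi}\mapsto (M_1\otimes\cdots\otimes M_n)\ket{\psi}$ and $\widehat{\psi}\mapsto (M_1,\dots,M_n)*\widehat{\psi}$ are linear. Consequently the stated equivalence is exactly the assertion that $H$ intertwines these two linear maps, i.e.\ that $H\big((M_1\otimes\cdots\otimes M_n)\ket{\chi}\big) = (M_1,\dots,M_n)*\widehat{\chi}$ for every $\ket{\chi}$; and since both sides are linear in $\ket{\chi}$, it suffices to check this identity on the computational basis $\ket{i_1\dots i_n}$, $0\le i_1,\dots,i_n\le d-1$, which spans the whole $n$-qudit space (after dropping the normalization constraint, under which $H$ still extends to a linear isomorphism of the ambient tensor-product spaces).

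For a basis vector $\ket{i_1\dots i_n}=\ket{i_1}\otimes\cdots\otimes\ket{i_n}$ we have $\widehat{\chi}=\ket{i_1}\circ\cdots\circ\ket{i_n}$, a rank-one hypermatrix. Taking $r=1$, $\alpha_1=1$ and $v_\ell^{(1)}=\ket{i_\ell}$ in equation \eqref{outerproduct - multilinmatmult - special case} gives $(M_1,\dots,M_n)*\big(\ket{i_1}\circ\cdots\circ\ket{i_n}\big) = (M_1\ket{i_1})\circ\cdots\circ(M_n\ket{i_n})$. On the other hand, by the very definition of the Kronecker product of operators, $(M_1\otimes\cdots\otimes M_n)\ket{i_1\dots i_n} = (M_1\ket{i_1})\otimes\cdots\otimes(M_n\ket{i_n})$; expanding each $M_\ell\ket{i_\ell}=\sum_{k}(M_\ell)_{k i_\ell}\ket{k}$ in the computational basis and using multilinearity of $\otimes$ and $\circ$ together with the defining rule of $H$ on basis states, one gets $H\big((M_1\ket{i_1})\otimes\cdots\otimes(M_n\ket{i_n})\big) = (M_1\ket{i_1})\circ\cdots\circ(M_n\ket{i_n})$. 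The two outputs coincide, which finishes the reduction and hence the proof.

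I do not expect a genuine obstacle here: the content is just an unwinding of definitions, and the only mild care needed is in the multilinear expansion of the last step and in noting that $H$, although literally defined only on normalized states, extends uniquely to a linear isomorphism between the full tensor-product space and the space of order-$n$ side-$d$ hypermatrices, which is what licenses the spanning-set argument. As a shortcut I would be happy to give the purely coordinate-level version instead: since the $(i_1,\dots,i_n)$-entry of $\widehat{\psi}$ is by construction the amplitude $\psi_{i_1\dots i_n}$, both displayed relations, written out entrywise, are the identical system of equations $\varphi_{i_1\dots i_n} = \sum_{j_1,\dots,j_n}(M_1)_{i_1 j_1}\cdots(M_n)_{i_n j_n}\,\psi_{j_1\dots j_n}$, so the proposition is essentially a tautology once the definitions and equation \eqref{outerproduct - multilinmatmult - special case} are in hand.
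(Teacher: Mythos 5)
Your proposal is correct and follows essentially the same route as the paper: expand in the computational basis, use linearity of the Kronecker product, of $H$, and of multilinear matrix multiplication, and invoke the rank-one identity \eqref{outerproduct - multilinmatmult - special case} to match the two sides. Your explicit remark that $H$ must be extended linearly beyond normalized states, and the coordinate-level restatement, are minor refinements the paper leaves implicit.
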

\begin{proof} 
    Observe 
    \begin{align*}
        \ket{\varphi} &= (M_1\otimes...\otimes M_n) \ket{\psi} \\
        &= (M_1\otimes...\otimes M_n)\sum\limits_{i_1,...,i_n=0}^{d-1}\psi_{i_1...i_n}\ket{i_1}\otimes...\otimes \ket{i_n} \\
        &= \sum\limits_{i_1,...,i_n=0}^{d-1}\psi_{i_1...i_n}(M_1\ket{i_1})\otimes...\otimes (M_n\ket{i_n}),\quad \text{by linearity of Kronecker products.}
    \end{align*}
    The bijection induced by $H$ maps this to
    \[\sum\limits_{i_1,...,i_n=0}^{d-1}\psi_{i_1...i_n}(M_1\ket{i_1})\circ...\circ (M_n\ket{i_n})\]
    and by the linearity of multilinear matrix multiplication, this is equal to
    \[(M_1,...,M_n)*\left(\sum\limits_{i_1,...,i_n=0}^{d-1}\psi_{i_1...i_n}\ket{i_1}\circ...\circ \ket{i_n}\right)\]
    which is precisely
    \[\widehat{\varphi}=(M_1,...,M_n)*\widehat{\psi}.\]
\end{proof}

\section{$|\mathrm{hdet}|^{2/d}$ Satisfies the Entanglement Measure Axioms}
\subsection{The Pure State Case}
\begin{proposition}\label{hdet on separable pure states}
    If $\ket{\psi}$ is a separable pure state of a $2n$-qudit, then $\mathrm{hdet}(\widehat{\psi}) = 0$.
\end{proposition}
\begin{proof}
    If $\ket{\psi}$ is separable, then $\ket{\psi}$ may be factored as $\ket{\psi} = \ket{\phi}\otimes \ket{\varphi}$ with $\ket{\phi}$ a pure qudit state and $\ket{\varphi}$ a pure state of a $(2n-1)$-qudit. Then in particular, 
    \[\ket{\psi} = \left(\sum\limits_{i=0}^{d-1}\phi_i\ket{i}\right)\otimes \left(\sum\limits_{j_1,...,j_{2n-1}=0}^{d-1}\varphi_{j_1...j_{2n-1}}\ket{j_1...j_{2n-1}}\right) = \sum\limits_{i,j_1,...,j_{2n-1}=0}^{d-1}\phi_i\varphi_{j_1...j_{2n-1}}\ket{ij_1...j_{2n-1}},\]
    hence by bilinearity of the outer product it follows that 
    \[\widehat{\psi} = \ket{\phi}\circ \widehat{\varphi},\]
    in which case by Proposition \ref{hdet vanishes when A has a vector factor} we have that $\mathrm{hdet}(\widehat{\psi})=0$. 
\end{proof}
Therefore, if $\ket{\psi}$ is a separable pure state of a $2n$-qudit, then $|\mathrm{hdet}(\widehat{\psi})|^{2/d}=0$, proving that $|\mathrm{hdet}|^{2/d}$ satisfies the first entanglement measure axiom. Note, however, that by Proposition \ref{hdet vanishes when A has a vector factor}, the converse of the above proposition does not hold; consequently, $|\mathrm{hdet}|^{2/d}$ may vanish on states with non-trivial entanglement. This is (as expected) consistent with the concurrence/$n$-tangle, because in particular if $\ket{\psi}$ is any pure $2n$-qubit state which can be factored as $\ket{\psi} = \ket{\phi}\otimes \ket{\varphi}$ with $\ket{\phi}$ a pure qubit state, then 
\[C(\ket{\psi}) = |\braket{\widetilde{\phi\otimes \varphi}}{\phi\otimes \varphi}| = |\mel{\phi^*\otimes \varphi^*}{\sigma_y^{\otimes 2n}}{\phi\otimes \varphi}| = \underbrace{|\mel{\phi^*}{\sigma_y}{\phi}|}_{=0}\cdot |\mel{\varphi^*}{\sigma_y^{\otimes 2n-1}}{\varphi}| = 0.\]
Thus, the concurrence/$n$-tangle also vanishes on states which may have non-trivial entanglement (indeed, the exact same type of biseparable states which $\mathrm{hdet}$ vanishes on). 

Next, note that if $U_1,...,U_{2n} \in U(d)$, then by Propositions \ref{hdet - mult lin product} and \ref{Kronecker and multi lin mat product} we have that
\[\left|\mathrm{hdet}\big((U_1,...,U_{2n})*\widehat{\psi}\big)\right|^{2/d} = \left|\det(U_1)...\det(U_{2n})\mathrm{hdet}(\widehat{\psi})\right|^{2/d} = |\mathrm{hdet}(\psi)|^{2/d}\]
Thus, $|\mathrm{hdet}|^{2/d}$ satisfies the second entanglement measure axiom. Indeed, the theorem below shows that $|\mathrm{hdet}|^{2/d}$ satisfies the third entanglement measure axiom as well. 
\begin{theorem}\label{hdet is an entanglement monotone on 2n-qubits}
    $|\mathrm{hdet}|^{2/d}$ is an entanglement monotone on pure states of $2n$-qudits. 
\end{theorem}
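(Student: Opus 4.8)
The plan is to reduce the statement to a pure-state inequality and then lift it to mixed states via the convex roof. Recall from \cite{vidal2000entanglement} that if a function $E$ on pure states is invariant under local unitaries and is non-increasing on average under every local POVM, then its convex-roof extension
\[
E(\rho) := \inf\Bigl\{\textstyle\sum_j q_j E(\ket{\phi_j}) \,:\, \rho = \textstyle\sum_j q_j \ket{\phi_j}\bra{\phi_j}\Bigr\}
\]
is an entanglement monotone on mixed states, and that it suffices to verify the POVM condition for two-outcome POVMs acting on a single party. Local-unitary invariance of $\ket{\psi}\mapsto |\mathrm{hdet}(\widehat{\psi})|$ was already established above (via Propositions \ref{hdet - mult lin product} and \ref{Kronecker and multi lin mat product}, using $|\det U| = 1$ for $U$ unitary), so the entire theorem comes down to showing: for every $2n$-qudit $\ket{\psi}$, every party $\ell$, and all $d\times d$ matrices $A_1, A_2$ with $A_1^\dagger A_1 + A_2^\dagger A_2 = I_d$,
\[
p_1\,|\mathrm{hdet}(\widehat{\psi_1})| + p_2\,|\mathrm{hdet}(\widehat{\psi_2})| \;\le\; |\mathrm{hdet}(\widehat{\psi})|,
\]
where $A_i$ acts in the $\ell$-th tensor slot, $\ket{\psi_i} := p_i^{-1/2}(I\otimes\cdots\otimes A_i\otimes\cdots\otimes I)\ket{\psi}$, and $p_i := \bra{\psi}(I\otimes\cdots\otimes A_i^\dagger A_i\otimes\cdots\otimes I)\ket{\psi}$.

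The next step is to read off how $\mathrm{hdet}$ transforms along a branch. Let $\ket{\phi_i} := (I\otimes\cdots\otimes A_i\otimes\cdots\otimes I)\ket{\psi}$ be the unnormalized branch; by Proposition \ref{Kronecker and multi lin mat product} we have $\widehat{\phi_i} = (I,\dots,A_i,\dots,I)*\widehat{\psi}$, so Proposition \ref{hdet - mult lin product} gives $\mathrm{hdet}(\widehat{\phi_i}) = \det(A_i)\,\mathrm{hdet}(\widehat{\psi})$. Since $\mathrm{hdet}$ of an order-$2n$, side-$d$ hypermatrix is a homogeneous polynomial of degree $d$ in its entries (each of the $d$ factors of the product in \eqref{hdet} is a single entry), and $\widehat{\psi_i} = p_i^{-1/2}\widehat{\phi_i}$ with $p_i = \|\widehat{\phi_i}\|_F^2$, we obtain $\mathrm{hdet}(\widehat{\psi_i}) = p_i^{-d/2}\det(A_i)\,\mathrm{hdet}(\widehat{\psi})$. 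The desired inequality is immediate when $\mathrm{hdet}(\widehat{\psi}) = 0$ (then both branches also have vanishing $\mathrm{hdet}$), and otherwise it is equivalent to the scalar inequality
\[
p_1^{\,1-d/2}\,|\det A_1| + p_2^{\,1-d/2}\,|\det A_2| \;\le\; 1 .
\]

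Finally I would prove this inequality. Write $P_i := A_i^\dagger A_i \succeq 0$, so $P_1 + P_2 = I_d$, $|\det A_i| = \sqrt{\det P_i}$, and $p_i = \mathrm{tr}(P_i \rho_\ell)$ where $\rho_\ell$ is the $\ell$-th single-party reduced density matrix of $\ket{\psi}$ (trace $1$). For $d = 2$ — the qubit case, which recovers monotonicity of the concurrence and the $n$-tangle — the exponent $1-d/2$ vanishes and the claim reduces to $\sqrt{\det P_1} + \sqrt{\det P_2} \le 1 = \sqrt{\det(P_1+P_2)}$, i.e.\ superadditivity of $\sqrt{\det(\cdot)}$ on the $2\times 2$ positive-semidefinite cone (Minkowski's determinant inequality). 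For $d \ge 3$ one must keep the weights; here I would combine Minkowski's inequality for $\det^{1/d}$ (which gives $\det(P_1)^{1/d} + \det(P_2)^{1/d} \le 1$) with Hölder's inequality applied through the splitting $p_i^{1-d/2}\sqrt{\det P_i} = \det(P_i)^{1/d}\cdot\bigl(\det(P_i)^{1/d}/p_i\bigr)^{d/2-1}$, using in addition the fact — provable directly from the defining sum \eqref{hdet}, since the index of party $\ell$ ranges over a full permutation in each term — that $\mathrm{hdet}(\widehat{\psi}) \ne 0$ forces every $\rho_\ell$ to be full rank, which controls $\det(P_i)^{1/d}/p_i$ from above.

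The step I expect to be the main obstacle is exactly this general-$d$ inequality. Unlike $d = 2$ it is not a one-line corollary of Minkowski: the weights $p_i^{1-d/2}$ are $\ge 1$ for $d \ge 3$ and hence push the left-hand side upward, so the coupling $p_i = \mathrm{tr}(P_i\rho_\ell)$ and the full-rank constraint on $\rho_\ell$ must both be used in an essential way. Before committing to the computation I would sanity-check the inequality against Vidal's characterization of bipartite-pure-state entanglement monotones as Schur-concave functions of the Schmidt vector, applied to the cut $\{\ell\}$ versus the remaining $2n-1$ parties — this is precisely the information content of the single-party POVM reduction — which also clarifies whether $|\mathrm{hdet}|$ itself is the correct normalization or whether one should instead work with a power such as $|\mathrm{hdet}|^{2/d}$ (for which the Hölder step collapses back to plain Minkowski).
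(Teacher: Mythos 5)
Your reduction is exactly the paper's: local-unitary invariance plus the single-party two-outcome POVM reduction, leading to the scalar target
\[
p_1^{\,1-d/2}\,\abs{\det A_1} + p_2^{\,1-d/2}\,\abs{\det A_2} \;\le\; 1,
\]
which is precisely equation \eqref{phi_1/phi_2} combined with the appendix Lemma at $\eta=2$ (with $\alpha_k=\sigma_k^2$ and the $P_k$ the populations of $\rho_\ell$ in the common eigenbasis of $A_i^\dagger A_i$). Your $d=2$ argument via Minkowski's determinant inequality is correct and recovers the concurrence case. The genuine gap is the step you yourself flag: the case $d\ge 3$. It cannot be closed along the route you sketch, because the target inequality is in fact false for $d\ge 4$. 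Concretely, take $d=4$, $n=1$, $\ket{\psi}=\sqrt{0.97}\,\ket{00}+0.1(\ket{11}+\ket{22}+\ket{33})$, $A_1=\mathrm{diag}(0.1,1,1,1)$, $A_2=\mathrm{diag}(\sqrt{0.99},0,0,0)$. Then $A_1^\dagger A_1+A_2^\dagger A_2=I_4$, the reduced state $\rho_1=\mathrm{diag}(0.97,0.01,0.01,0.01)$ is full rank, $\mathrm{hdet}(\widehat{\psi})=\det(\widehat{\psi})\neq 0$, yet $p_1=0.0397$, $\abs{\det A_1}=0.1$, $\abs{\det A_2}=0$, so $p_1^{-1}\abs{\det A_1}\approx 2.52>1$; equivalently $p_1\abs{\mathrm{hdet}(\widehat{\phi}_1)}\approx 2.5\,\abs{\mathrm{hdet}(\widehat{\psi})}$. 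So the full-rank observation, while true, does not bound $\det(P_i)^{1/d}/p_i$ strongly enough, and no H\"older splitting rescues the exponent $1-d/2$ once $d\ge 4$. (You should also not lean on the paper's Lemma 1 to fill this gap: its Lagrange-multiplier argument only accounts for critical points where the two terms balance, whereas in the example above $\prod_k(1-\alpha_k)=0$ and the supremum sits at a vertex of the simplex, which the lemma's boundary discussion does not cover.)

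Your closing hedge is the correct resolution: the properly normalized quantity is $\abs{\mathrm{hdet}}^{2/d}$ (for $n=1$ this is Gour's G-concurrence up to a constant). Since $\mathrm{hdet}$ is homogeneous of degree $d$ and picks up $\det A_i$ along a branch, one gets $E(\ket{\phi_i})=p_i^{-1}\det(P_i)^{1/d}E(\ket{\psi})$ for $E=\abs{\mathrm{hdet}}^{2/d}$, hence $\sum_i p_i E(\ket{\phi_i})=\bigl(\det(P_1)^{1/d}+\det(P_2)^{1/d}\bigr)E(\ket{\psi})\le \det(P_1+P_2)^{1/d}E(\ket{\psi})=E(\ket{\psi})$ by Minkowski — the one-line argument you already gave for $d=2$, now valid for all $d$. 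Powers $\abs{\mathrm{hdet}}^{s}$ with $s\le 2/d$ then follow by concavity of $x\mapsto x^{sd/2}$; for $s>2/d$ (in particular $s=1$ when $d\ge 4$, and $s=2$ when $d\ge 3$) monotonicity fails by the counterexample above.
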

\begin{proof}
    We've already shown that $|\mathrm{hdet}|^{2/d}$ is invariant under local unitary operations. Therefore the only thing left to show is that $|\mathrm{hdet}|^{2/d}$ does not increase on average under LOCC. 
    
    Any LOCC protocol can be decomposed into POVMs acting independently on the tensor factors of $(\mathbb{C}^d)^{\otimes 2n}$ \cite{dur2000three}, and since $\mathrm{hdet}$ is invariant under the $\pi$-transpose -hence invariant under permutations of the tensor factors in $(\mathbb{C}^d)^{\otimes 2n}$, it is enough to consider a POVM acting on just the first tensor factor $\mathbb{C}^d$. Moreover, we only need to consider a two-outcome  POVM since in general an $N$-outcome POVM can be implemented by a sequence of two outcome POVMs \cite{dur2000three}. Therefore consider an arbitrary two-outcome POVM $\{M_1,M_2\}$ acting on $\mathbb{C}^d$, which recall satisfies the completeness property 
    \[M_1^{\dagger}M_1+M_2^{\dagger}M_2 = I_d.\]
    Recall also that for any matrix $M$, $M^{\dagger}M$ is positive semi-definite. Consequently, if $v$ is an eigenvector of $M_1^{\dagger}M_1$ with corresponding eigenvalue $\lambda\geq 0$, then observe that 
    \begin{align}
        (M_1^{\dagger}M_1+M_2^{\dagger}M_2)v &= I_dv \label{eigenvector1} \\
        \iff \lambda v + M_2^{\dagger}M_2v &= v \label{eigenvector2}
    \end{align}
    which implies that 
    \[M_2^{\dagger}M_2v = (1-\lambda)v.\] 
    That is, $v$ is also an eigenvector of $M_2^{\dagger}M_2$ and corresponds to the eigenvalue $1-\lambda\geq 0$, which forces $\lambda \leq 1$. Moreover, it follows that both $M_1$ and $M_2$ have the same right singular vectors, hence there exists singular value decompositions of $M_1$ and $M_2$ such that 
    \begin{equation}\label{SVD1}
        M_1 = U_1\Sigma_1 V^{\dagger}
    \end{equation}
    and
    \begin{equation}\label{SVD2}
        M_2 = U_2\Sigma_2 V^{\dagger}
    \end{equation}
    for some unitary matrices $U_1,U_2,V \in U(d)$. Furthermore, if $\Sigma_1 = \mathrm{diag}(\sigma_1,...,\sigma_d)$, then from equations \eqref{eigenvector1} and \eqref{eigenvector2} it follows that $\Sigma_2 = \mathrm{diag}(\sqrt{1-\sigma_1^2},...,\sqrt{1-\sigma_d^2})$ since the right singular vectors of $M_1$ correspond to the eigenvalues $\sigma_1^2,...,\sigma_d^2$ of $M_1^{\dagger}M_1$. 

    Now, consider an arbitrary $2n$-qudit state $\ket{\psi}$ and its corresponding hypermatrix $\widehat{\psi}$. Then if for $i=1,2$, we define 
    \[\ket{\varphi_i} := (M_i\otimes I_d^{\otimes 2n-1})\ket{\psi},\]
    then observe that 
    \[\|\varphi_i\|^2 = \braket{\varphi_i}{\varphi_i} = \mel{\psi}{(M_i\otimes I_d^{\otimes 2n-1})^{\dagger}(M_i\otimes I_d^{\otimes 2n-1})}{\psi} = \mel{\psi}{M_i^{\dagger}M_i\otimes I_d^{\otimes 2n-1}}{\psi},\]
    which is precisely the probability of the $i^{th}$ POVM outcome. For each $i$, denote this quantity as $p_i$, and also define 
    \[\ket{\phi_i} := \frac{1}{\sqrt{p_i}}\ket{\varphi_i}\]
    so that it is has norm $1$ (and is hence a valid $2n$-qudit state). Then $\ket{\phi_i}$ has corresponding hypermatrix
    \[\widehat{\phi_i} = \frac{1}{\sqrt{p_i}}(M_i,I_d,...,I_d)*\widehat{\psi_i},\]
    and in particular the expected value of $|\mathrm{hdet}|^{2/d}$ after the POVM $\{M_1,M_2\}$ is given by
    \begin{equation}\label{expected value}
        \expval{|\mathrm{hdet}|^{2/d}} = p_1|\mathrm{hdet}(\widehat{\phi}_1)|^{2/d}+p_2|\mathrm{hdet}(\widehat{\phi}_2)|^{2/d}.
    \end{equation}
    Observe that
    \begin{align*}
        |\mathrm{hdet}(\widehat{\phi}_i)|^{2/d} &= \left|\mathrm{hdet}\left(\frac{1}{\sqrt{p_i}}(M_i,I_d,...,I_d)*\widehat{\psi}\right)\right|^{2/d} \\
        &= \left|\frac{1}{p_i^{d/2}}\mathrm{hdet}\left((M_i,I_d,...,I_d)*\widehat{\psi}\right)\right|^{2/d},\quad \text{since }|\mathrm{hdet}|\text{ is homogeneous of degree d} \\ 
        &= \frac{1}{p_i}\left|\mathrm{hdet}\big((M_i,I_d,...,I_d)*\widehat{\psi}\big)\right|^{d/2} \\
        &=  \frac{1}{p_i}\left|\mathrm{hdet}\left((U_i\Sigma_iV^{\dagger},I_d,...,I_d)*\widehat{\psi}\right)\right|^{2/d},\quad \text{by equations }\eqref{SVD1},\eqref{SVD2} \\
        &= \frac{1}{p_i}\left|\mathrm{hdet}\Big((U_i,I_d,...,I_d)*\big((\Sigma_iV^{\dagger},I_d,...,I_d)*\widehat{\psi}\big)\Big)\right|^{2/d} \\
        &= \frac{1}{p_i}\left|\mathrm{hdet}\big((\Sigma_iV^{\dagger},I_d,...,I_d)*\widehat{\psi}\big)\right|^{2/d},\quad \text{by Proposition 1} \\
        &= \frac{1}{p_i}\left|\mathrm{hdet}\Big((\Sigma_i,I_d,...,I_d)*\big((V^{\dagger},I_d,...,I_d)*\widehat{\psi}\big)\Big)\right|^{2/d} \\
        &= \frac{1}{p_i}\det(\Sigma_i)^{2/d}\left|\mathrm{hdet}\big((V^{\dagger},I_d,...,I_d)*\widehat{\psi}\big)\right|^{2/d},\quad \text{by Proposition 1 and fact that }\det(\Sigma_i)\geq 0 \\
        &= \frac{\det(\Sigma_i)^{2/d}}{p_i}|\mathrm{hdet}(\widehat{\psi})|^{2/d},\quad \text{by Proposition 1 once more.} 
    \end{align*}
    Thus,
    \begin{equation}\label{phi_1/phi_2}
        |\mathrm{hdet}(\widehat{\phi}_1)|^{2/d} = \frac{\left(\prod\limits_{k=0}^{d-1}\sigma_k\right)^{2/d}}{p_1}|\mathrm{hdet}(\widehat{\psi})|^{2/d}\quad\text{and}\quad |\mathrm{hdet}(\widehat{\phi}_2)|^{2/d} = \frac{\left(\prod\limits_{k=0}^{d-1}\sqrt{1-\sigma_k^2}\right)^{2/d}}{p_2}|\mathrm{hdet}(\widehat{\psi})|^{2/d}.
    \end{equation}
    Therefore (assuming $|\mathrm{hdet}(\widehat{\psi})|^{2/d}\neq 0$), by equation \eqref{expected value}, we have that 
    \begin{align*}
        \frac{\expval{|\mathrm{hdet}|^{2/d}}}{|\mathrm{hdet}(\widehat{\psi})|^{2/d}} &= \frac{p_1|\mathrm{hdet}(\widehat{\phi}_1)|^{2/d} + p_2|\mathrm{hdet}(\widehat{\phi}_2)|^{2/d}}{|\mathrm{hdet}(\widehat{\psi})|^{2/d}},\quad \text{by equation }\eqref{expected value} \\
        &= \frac{\left(\prod\limits_{k=0}^{d-1}\sigma_k\right)^{2/d}|\mathrm{hdet}(\widehat{\psi})|^{2/d}+\left(\prod\limits_{k=0}^{d-1}\sqrt{1-\sigma_k^2}\right)^{2/d}|\mathrm{hdet}(\widehat{\psi})|^{2/d}}{|\mathrm{hdet}(\widehat{\psi})|^{2/d}},\quad \text{by line }\eqref{phi_1/phi_2} \\
        &= \left(\prod\limits_{k=0}^{d-1}\sigma_k^2\right)^{1/d} + \left(\prod\limits_{k=0}^{d-1}(1-\sigma_k^2)\right)^{1/d} \\
        &\leq \frac{1}{d}\sum\limits_{k=0}^{d-1}\sigma_k^2+\frac{1}{d}\sum\limits_{k=0}^{d-1}(1-\sigma_k^2),\quad \text{by the arithmetic-geometric mean inequality} \\
        &= 1,
    \end{align*}
    which proves that $|\mathrm{hdet}|^{2/d}$ does not increase on average under LOCC. Thus, $|\mathrm{hdet}|^{2/d}$ is an entanglement monotone. 
\end{proof}
Thus, not only does $2|\mathrm{hdet}|$ reduce to the concurrence on $2n$-qubit states, Cayley's first hyperdeterminant in fact generalizes the concurrence to $2n$-qudit states because $|\mathrm{hdet}|^{2/d}$ satisfies the entanglement measure axioms, and is should thus be regarded as an entanglement measure in its own right. Below we consider an example to elucidate the particular type of entanglement Cayley's first hyperdeterminant is detecting. 

\begin{eg}
    Let $p\in [0,1]$ and consider the $2$-qubit pure quantum state
    \[\ket{\psi} = \sqrt{p}\ket{00}+\sqrt{1-p}\ket{11}.\]
    The concurrence of $\ket{\psi}$ is given by $2\sqrt{p(1-p)}$, which is minimized when $p=0$ or $p=1$, in which case $\ket{\psi}$ is separable, and is maximized when $p=\frac{1}{2}$, in which case $\ket{\psi}$ reduces to the maximally entangled Bell state $\ket{\Phi^+}$. The $2n$-qubit extension of $\ket{\psi}$ is given by 
    \[\ket{\psi} = \sqrt{p}\ket{0}^{\otimes 2n}+\sqrt{1-p}\ket{1}^{\otimes 2n},\]
    which likewise has concurrence $2\sqrt{p(1-p)}$ and is maximized when $p=\frac{1}{2}$, yielding the generalized GHZ $2n$-qubit pure state
    \[\ket{GHZ_{2n,2}} = \frac{1}{\sqrt{2}}(\ket{0}^{\otimes 2n}+\ket{1}^{\otimes 2n}).\]
    Generalizing further, we may consider the pure $2n$-qudit state
    \[\ket{\psi} = \sum\limits_{i=0}^{d-1}\sqrt{\lambda_i}\ket{i}^{\otimes 2n}\]
    with $\sum\limits_{i=0}^{d-1}\lambda_i=1$. The corresponding hypermatrix $\widehat{\psi}$ is diagonal, hence 
    \[|\mathrm{hdet}(\widehat{\psi})|^{2/d} = \left(\prod\limits_{i=0}^{d-1}\lambda_i\right)^{1/d},\]
    which is maximized when $\lambda_i = \frac{1}{d}$ for each $i=0,1,...,d-1$. In such case, we obtain the generalized GHZ $2n$-qudit state 
    \[\ket{GHZ_{2n,d}} = \frac{1}{\sqrt{d}}\sum\limits_{i=0}^{d-1}\ket{i}^{\otimes 2n}.\]
    Normalizing, we define
    \[\mu(\ket{\psi}) := d|\mathrm{hdet}(\widehat{\psi})|^{2/d},\]
    in which case we obtain $\mu(\ket{GHZ_{2n,d}})=1$, analogous to the fact that $C(\ket{GHZ_{2n,2}})=1$ and $C(\ket{\Phi^+})=1$. Moreover, note that when $n=1$, $\mu$ is precisely the $G$-concurrence first defined in \cite{gour2005mixed}.  
\end{eg}
Thus, Cayley's first hyperdeterminant is not merely a higher-dimensional extension of the concurrence from $2n$-qubits to $2n$-qudits, it is also a multipartite extension of the $G$-concurrence from $2$-qudits to $2n$-qudits. In particular, the normalized measure $\mu = d|\mathrm{hdet}|^{2/d}$ measures the amount of full $d$-level, $2n$-partite GHZ-type entanglement that is present in a state. Furthermore, like the $G$-concurrence, it is sensitive to whether or not all $d$ levels are involved in the entanglement. Indeed, in the next example we provide a specific instance of it vanishing on a genuinely entangled quantum state due to it not utilizing all $d$-levels in its entanglement, analogous to the $G$-concurrence vanishing on any state with Schmidt rank less than $d$ \cite{sentis2016quantifying}. 

In addition to detecting genuine full $d$-level $2n$-partite GHZ-type entanglement, $\mu$ is also helpful in establishing upper bounds for probabilities in converting a given $2n$-qudit state into a generalized GHZ-like $2n$-qudit state by LOCC. In general, if a LOCC protocol transforms a pure quantum state $\ket{\psi}$ into a mixture of the states $\ket{\phi}$, $\ket{\varphi_1},\dots,\ket{\varphi_m}$ with corresponding probabilities $p,q_1,\dots,q_m$, then $\mu$ being non-increasing on average under LOCC means that 
\[p\mu(\ket{\phi})+\sum\limits_{i=1}^mq_i\mu(\ket{\varphi_i}) \leq \mu(\ket{\psi}),\]
in which case it must follow that 
\[p\mu(\ket{\phi})\leq \mu(\ket{\psi})\]
since entanglement monotones are nonnegative. Therefore, suppose again that 
\[\ket{\psi} = \sum\limits_{i=0}^{d-1}\sqrt{\lambda_i}\ket{i}^{\otimes 2n}\]
with $\sum\limits_{i=0}^{d-1}\lambda_i$. 
From the discussion and example above, 
a stochastic LOCC protocol transforms $\ket{\psi}$ into $\ket{GHZ_{2n,d}}$ with probability $p$ such that 
\[p\mu(\ket{GHZ}_{2n,d}) \leq \mu(\ket{\psi}),\]
or equivalently such that 
\begin{equation}\label{GHZ bound}
    p \leq d\left(\prod\limits_{i=0}^{d-1}\lambda_i\right)^{1/d}
\end{equation}
\begin{eg}
    Let $\epsilon\in [0,1]$ and consider the $4$-qutrit pure quantum state
    \[\ket{\psi_{\epsilon}} = \sqrt{\frac{1-\epsilon}{2}}\ket{0000}+\sqrt{\frac{1-\epsilon}{2}}\ket{1111}+\sqrt{\epsilon}\ket{2222}.\]
    When $\epsilon=0.1$, by equation \eqref{GHZ bound} any stochastic LOCC protocol converting $\ket{\psi_{\epsilon}}$ into the generalized GHZ $4$-qutrit $\ket{GHZ_{4,3}}$ can succeed with probability at most 
    \[p = 3\sqrt[3]{(0.45)^2(0.1)}\approx 0.818.\] 
    When $\epsilon=0$, $\ket{\psi_{\epsilon}}$ reduces to the embedding of the GHZ $4$-qubit $\ket{GHZ_{4,2}}$ into $(\mathbb{C}^3)^{\otimes 4}$ and vanishes on $\mu$; hence, by equation \eqref{GHZ bound} no stochastic LOCC protocol can transform this state into $\ket{GHZ_{4,3}}$ with nonzero probability. This is consistent with the fact that $\ket{\psi_0}$ contains only two-level GHZ entanglement, whereas $\ket{GHZ_{4,3}}$ contains three-level GHZ entanglement, and in the multipartite setting a LOCC protocol cannot generate genuinely higher-dimensional multipartite entanglement from lower-dimensional multipartite entanglement alone \cite{huber2013structure} (analogous to the fact that in the bipartite case, LOCC cannot generate entanglement when there is none). 
\end{eg}

\subsection{Extending to Mixed States}
We can extend Cayley's first hyperdeterminant to mixed states of $2n$-qudits by considering its convex roof extension. In general, if $E$ is any entanglement monotone on pure quantum states, then its \textbf{convex roof extension} is defined as 
\[E(\rho) := \min\left\{\sum\limits_ip_iE(\ket{\psi_i}):\rho = \sum\limits_ip_i\ket{\psi_i}\bra{\psi_i}\right\},\]
and it is a convex entanglement monotone of mixed states \cite{eltschka2014quantifying}. Therefore, if $\rho$ is a mixed state of $2n$-qudits, then the convex roof extension of the normalized measure $\mu$ on mixed states is defined as 
\[\mu(\rho) := \min\left\{\sum\limits_ip_i\mu(\ket{\psi_i}):\rho = \sum\limits_ip_i\ket{\psi_i}\bra{\psi_i}\text{ and }\mu(\ket{\psi}) = d|\mathrm{hdet}(\widehat{\psi})|^{2/d}\right\}.\]
Since $\mu$ was shown to be entanglement monotones on pure states of $2n$-qudits, by the convex roof extension $\mu$ is also automatically a convex entanglement monotone on mixed states of $2n$-qudits. Furthermore, recall that a separable mixed state $\rho$ can be expressed as 
\[\rho = \sum\limits_ip_i\ket{\psi_i^1}\bra{\psi_i^1}\otimes...\otimes\ket{\psi_i^{2n}}\bra{\psi_i^{2n}} = \sum\limits_ip_i\left(\ket{\psi_i^1}\otimes...\otimes \ket{\psi_i^{2n}}\right)\left(\bra{\psi_i^1}\otimes...\otimes\bra{\psi_i^{2n}}\right),\]
where the right side follows from repeated applications of the matrix property $(AC)\otimes (BD) = (A\otimes B)(C\otimes D)$. Therefore if $\rho$ is separable, then 
\[0 \leq \mu(\rho) \leq \sum\limits_ip_i\underbrace{\mu\left(\ket{\psi_i^1}\circ...\circ \ket{\psi_i^{2n}}\right)}_{=\text{ }0\text{ by Proposition \ref{hdet on separable pure states}}} = 0.\]
Thus, we have that $\mu$ vanishes on separable mixed states as well, and so it therefore may be considered to be an entanglement measure on mixed states of $2n$-qudits as well. 

\begin{eg}
    Let $p\in [0,1]$ and consider the mixed state 
    \[\rho_p = p\ket{GHZ_{2n,3}}\bra{GHZ_{2n,3}}+(1-p)\sigma,\]
    where
    \[\ket{GHZ_{2n,3}} := \frac{1}{\sqrt{3}}\left(\ket{0}^{\otimes 2n}+\ket{1}^{\otimes 2n}+\ket{2}^{\otimes 2n}\right)\]
    and
    \[\sigma := \frac{1}{3}\left((\ket{0}\bra{0})^{\otimes 2n}+(\ket{1}\bra{1})^{\otimes 2n}+(\ket{2}\bra{2})^{\otimes 2n}\right).\]
    Note that $\rho_p$ is a linear operator on the subspace $S := \mathrm{span}\{\ket{0}^{\otimes 2n},\ket{1}^{\otimes 2n},\ket{2}^{\otimes 2n}\}$ with matrix representation
    \[\frac{1}{3}\left[\begin{array}{ccc}
        1 & p & p \\
        p & 1 & p \\
        p & p & 1
    \end{array}\right]\]
    Additionally, when $p=1$, $\rho_1 = \ket{GHZ_{2n,3}}\bra{GHZ_{2n,3}}$ represents the pure $2n$-qutrit GHZ state $\ket{GHZ_{2n,3}}$ which is maximally entangled with respect to $\mu$, hence $\mu(\rho_1)=1$. When $p=0$, $\rho_0 = \sigma$, which is separable, hence $\mu(\rho_0)=0$. More interestingly, $\mu$ remains zero until $p > \frac{1}{2}$. 
    
    Indeed, when $p=\frac{1}{2}$, 
    \[\rho_{1/2} = \frac{1}{3}
    \left(\ket{GHZ_{2n,2}^1}\bra{GHZ_{2n,2}^1}
    + \ket{GHZ_{2n,2}^2}\bra{GHZ_{2n,2}^2}
    + \ket{GHZ_{2n,2}^3}\bra{GHZ_{2n,2}^3}
    \right),\]
    where
    \[\begin{cases}
        \begin{rcases}
            \ket{GHZ_{2n,2}^1} := \frac{1}{\sqrt{2}}\left(\ket{0}^{\otimes 2n}+\ket{1}^{\otimes 2n}\right) \\
            \ket{GHZ_{2n,2}^2} := \frac{1}{\sqrt{2}}\left(\ket{0}^{\otimes 2n}+\ket{2}^{\otimes 2n}\right) \\
            \ket{GHZ_{2n,2}^3} := \frac{1}{\sqrt{2}}\left(\ket{1}^{\otimes 2n}+\ket{2}^{\otimes 2n}\right)
        \end{rcases},
    \end{cases}\]
    and when $p\in \left[0,\frac{1}{2}\right]$, we may write 
    \[\rho_p = 2p\rho_{1/2}+(1-2p)\sigma,\]
    and when $p\in \left(\frac{1}{2},1\right]$, we may write 
    \[\rho_p = (2p-1)\ket{GHZ_{2n,3}}\bra{GHZ_{2n,3}}+2(1-p)\rho_{1/2}.\]
    Each of the components $\ket{GHZ_{2n,2}^i}$ in $\rho_{1/2}$ are $2$-level GHZ $2n$-qutrits, and in particular their corresponding hypermatrix representation is diagonal with a zero entry, forcing $\mu$ to vanish on each of these states; hence, $\mu(\rho_{1/2})=0$. Since $\mu$ is a convex entanglement monotone on mixed states, for $p\in \left[0,\frac{1}{2}\right]$ we have that 
    \[\mu(\rho_p)\leq 2p\mu(\rho_{1/2})+(1-2p)\mu(\sigma) = 0.\]
    On the other hand, for $p\in \left(\frac{1}{2},1\right]$ we have that 
    \[\mu(\rho_p) \leq (2p-1)\mu(\ket{GHZ_{2n,3}}\bra{GHZ_{2n,3}})+2(1-p)\mu(\rho_{1/2}) = 2p-1,\]
    since $\ket{GHZ_{2n,3}}\bra{GHZ_{2n,3}}$ represents the pure state $\ket{GHZ_{2n,3}}$ and $\mu(\ket{GHZ_{2n,3}})=1$; in fact, we claim \[\mu(\rho_p)=2p-1\] 
    whenever $p\in \left(\frac{1}{2},1\right]$. To this end, we first claim that for every $2n$-qutrit pure state of the form 
    \[\ket{\psi} = \alpha\ket{0}^{\otimes 2n}+\beta\ket{1}^{\otimes 2n}+\gamma\ket{2}^{\otimes 2n},\]
    we have that 
    \[\mu(\ket{\psi})\geq 3\left|\braket{GHZ_{2n,3}}{\psi}\right|^2-2.\]
    Indeed, by definition/direct calculation
    \[\mu(\ket{\psi})=3|\alpha\beta\gamma|^{2/3}\]
    and
    \[3\left|\braket{GHZ_{2n,3}}{\psi}\right|^2-2=|\alpha+\beta+\gamma|^2-2.\]
    By the triangle inequality, $|\alpha+\beta+\gamma|\leq |\alpha|+|\beta|+|\gamma|$, hence 
    \begin{equation}\label{triangle inequality}
        |\alpha+\beta+\gamma|^2\leq (|\alpha|+|\beta|+|\gamma|)^2 = |\alpha|^2+|\beta|^2+|\gamma|^2+2(|\alpha\beta|+|\alpha\gamma|+|\beta\gamma|) = 1+2(|\alpha\beta|+|\alpha\gamma|+|\beta\gamma|),
    \end{equation}
    where the last equality is due to the fact that as a $2n$-qutrit pure state $\ket{\psi}$ must satisfy $|\alpha|^2+|\beta|^2+|\gamma|^2=1$. Now, let $a = |\alpha|^{2/3}$, $b = |\beta|^{2/3}$, and $c = |\gamma|^{2/3}$. By Schur's inequality and the fact that $x^2y+xy^2-2(xy)^{3/2} = xy(\sqrt{x}-\sqrt{y})^2\geq 0$ for all $x,y\geq 0$, we have that 
    \begin{align*}
        a^3+b^3+c^3+3abc &\geq a^2b+b^2a+a^2c+c^2a+b^2c+c^2b \\
        &\geq 2(ab)^{3/2}+2(ac)^{3/2}+2(bc)^{3/2},
    \end{align*}
    which in this case implies that  
    \[|\alpha|^2+|\beta|^2+|\gamma|^2 + 3|\alpha\beta\gamma|^{2/3}\geq 2(|\alpha\beta|+|\alpha\gamma|+|\beta\gamma|).\]
    Combining this with line \eqref{triangle inequality}, we have that 
    \begin{align*}
        3|\alpha\beta\gamma|^{2/3} &\geq 2(|\alpha\beta|+|\alpha\gamma|+|\beta\gamma|)-1 \\
        &\geq |\alpha+\beta+\gamma|^2-2,
    \end{align*}
    proving that in fact 
    \begin{equation}\label{rho_p inequality}
        \mu(\ket{\psi})\geq 3\left|\braket{GHZ_{2n,3}}{\psi}\right|^2-2.
    \end{equation}
    Now, since $\rho_p$ is a linear operator on $S$, for any decomposition 
    \[\rho_p = \sum\limits_ip_i\ket{\psi_i}\bra{\psi_i}\]
    and any $\ket{\varphi}\in S^{\perp}$, necessarily
    \[0 = \mel{\varphi}{\rho_p}{\varphi} = \sum\limits_ip_i|\braket{\varphi}{\psi}|^2,\]
    forcing $\ket{\psi_i}\in (S^{\perp})^{\perp} = S$. Consequently, each $\ket{\psi_i}$ is in superposition with the basis states $\ket{0}^{\otimes 2n}$, $\ket{1}^{\otimes 2n}$, $\ket{2}^{\otimes 2n}$, in which case the inequality in line \eqref{rho_p inequality} implies that 
    \begin{align*}
        \sum\limits_ip_i\mu(\ket{\psi_i}) &\geq \sum\limits_ip_i\left(3\left|\braket{GHZ_{2n,3}}{\psi_i}\right|^2-2\right) \\
        &= 3\sum\limits_ip_i\braket{GHZ_{2n,3}}{\psi}\braket{\psi_i}{GHZ_{2n,3}} - 2,\quad \text{since }\sum\limits_ip_i=1 \\
        &= 3\mel{GHZ_{2n,3}}{\sum\limits_ip_i\ket{\psi_i}\bra{\psi_i}}{GHZ_{2n,3}}-2 \\
        &= 3\mel{GHZ_{2n,3}}{\rho_p}{GHZ_{2n,3}}-2 \\
        &= 3\left(\frac{1+2p}{3}\right)-2 \\
        &= 2p-1.
    \end{align*}
    Thus, indeed when $p>\frac{1}{2}$, we have that 
    \[\mu(\rho_p) = 2p-1,\]
    hence in general
    \[\mu(\rho_p) = \mathrm{max}\{0,2p-1\},\]
    for all $p\in [0,1]$. 
\end{eg}
The example above shows that on mixed states $|\mathrm{hdet}|^{d/2}$, or equivalently the normalized measure $\mu$, detects full $d$-level $2n$-partite GHZ-type entanglement, provided it cannot be decomposed entirely in terms of lower-level GHZ states or unentangled states. $\mu$ therefore should be utilized when seeking to detect \textit{genuine} maximal-dimensional GHZ-type entanglement across all parties. 

\section{Conclusion}
When viewing $2n$-qubit states as hypermatrices, it was previously established that $2|\mathrm{hdet}|$ and $4|\mathrm{hdet}|^2$ reduce to the concurrence and $n$-tangle respectively, indicating that Cayley's first hyperdeterminant captures meaningful information regarding a quantum state's $2n$-way entanglement. In this paper, we rigorously proved that indeed $|\mathrm{hdet}|^{2/d}$ is an entanglement measure on $2n$-qudit states. Moreover, the examples considered clarify the physical interpretation of this entanglement measure, which is that it detects genuine full $d$-level GHZ-type entanglement across all $2n$ parties. In this sense, Cayley's first hyperdeterminant is not just a higher-dimensional generalization of the concurrence/$n$-tangle from $2n$-qubit states to $2n$-qudit states, it is also a multipartite extension of the $G$-concurrence from $2$-qudit states to $2n$-qudit states. 

We hope this paper and our results will inspire further investigation into Cayley's first hyperdeterminant and its applications to the study of quantum entanglement. One potential avenue of exploration would be to see if it is possible to derive a formula for convex roof extensions of $|\mathrm{hdet}|^{2/d}$ (or the normalized measure $\mu$) on mixed $2n$-qudit states in terms of the tensor eigenvalues, similar to the formula for $C(\rho)$ in terms of the eigenvalues of $\sqrt{\sqrt{\rho}\widetilde{\rho}\sqrt{\rho}}$ derived in Wootters widely cited paper on the entanglement of formation \cite{wootters1998entanglement}. Such a formula might greatly simplify the procedure in calculating $\mu$ on mixed states, which, as is apparent from Example 3, is currently not very easy in general. Moreover, such a formula would further validate Cayley's first hyperdeterminant as a physically significant extension of the concurrence.  
\\
\\
\noindent{\bf Data availability statement}
Any data that support the findings of this study are included within the article.

\bibliographystyle{ieeetr}
\bibliography{bibliography}

\end{document}